\documentclass[11pt]{article}

\usepackage{amsmath,amssymb,amsfonts,amsthm}
\usepackage{mathtools}
\usepackage{dsfont}
\usepackage{booktabs}
\usepackage{graphicx}
\usepackage{xcolor}
\usepackage{microtype}
\usepackage{authblk}
\usepackage{tikz}

\usepackage[backend=biber,style=numeric,sorting=nyt,maxbibnames=99,natbib=true]{biblatex}

\usepackage{hyperref}

\hypersetup{colorlinks=true,linkcolor=blue,citecolor=blue,urlcolor=blue}

\newtheorem{theorem}{Theorem}[section]
\newtheorem{proposition}[theorem]{Proposition}

\theoremstyle{definition}
\newtheorem{definition}[theorem]{Definition}
\theoremstyle{remark}
\newtheorem{remark}[theorem]{Remark}

\newcommand{\V}{V}
\newcommand{\E}{E}
\newcommand{\R}{\mathbb{R}}
\newcommand{\onevec}{\mathds{1}}
\newcommand{\A}{\mathbf{A}}
\newcommand{\D}{\mathbf{D}}
\newcommand{\W}{\mathbf{W}}
\newcommand{\Tt}{\mathbf{T}}
\newcommand{\Oo}{\mathbf{O}}
\newcommand{\J}{\mathbf{J}}
\newcommand{\diag}{\operatorname{diag}}
\newcommand{\Tr}{\operatorname{Tr}}
\newcommand{\supp}{\operatorname{supp}}
\newcommand{\abs}[1]{\left|#1\right|}

\newcommand{\Id}{I_n}

\title{Two--Path Operators, Triadic Decompositions, and Majorized Quotients for Ego--Centered Network Compression}
\author{Moses Boudourides\\
School of Professional Studies, Northwestern University\\
Evanston, IL, United States\\
\texttt{Moses.Boudourides@northwestern.edu}}
\date{}

\begin{document}
\maketitle

\begin{abstract}
Two--paths (wedges) are the elementary combinatorial objects behind clustering, triadic closure, redundancy, and brokerage.
Motivated by a two--path formalism that links Burt's structural holes to node--centered ego networks, we develop an operator viewpoint
in which wedge incidence induces a canonical ``two--walk'' matrix and a unique decomposition into an edge--supported (triadic) part
and a nonedge--supported (open) part.
We then study quotient/contraction constructions designed to compress collections of dominating ego networks together with selected
``traversing'' nodes, and we prove a two--walk transfer theorem under contraction, establishing an inequality with an explicit nonnegative
error term and an equality characterization in terms of a wedge--equitable partition.
Finally, we illustrate the theory on ten benchmark graphs and their ego--traversing contractions using table--driven diagnostics. 
\end{abstract}

\section{Why wedges as operators matter in network science}
Triadic closure and its complement---open two--paths---are central to network science.
Watts--Strogatz clustering and Newman's transitivity quantify closure at node and graph scales \parencite{watts_strogatz_1998,newman_2003_structure,newman_2010_networks},
while Burt's theory of structural holes interprets openness as redundancy/brokerage in ego networks \parencite{burt_1995_structuralholes,borgatti_1997_structuralholes,perry_pescosolido_borgatti_2018}.
All of these notions are naturally expressed in terms of \emph{two--paths} (wedges), but typical summaries compress the wedge structure into a scalar.
Our goal is to keep wedge information \emph{matrix--valued} and thus suitable for algebraic and spectral analysis, and to connect the resulting invariants
to compression procedures that aggregate ego--centered subgraphs.

The operator viewpoint has two practical payoffs.
First, it yields a canonical decomposition of the two--walk operator into a triadic part supported on edges and an open part supported on nonedges.
Second, it provides a principled way to diagnose how much two--walk structure is distorted by contracting families of ego networks into ``supernodes'',
a common step in network visualization and multiscale modeling.
The main technical point is that naive quotient formulas can fail unless one states explicit regularity hypotheses: contracting vertices into blocks may artificially mix two--walk contributions coming from different vertices in the same block, thereby overcounting the amount of traversable two--step structure in the quotient network.
Consequently, a coarse-grained network can appear to preserve two--walk connectivity even when the original network does not.
We therefore emphasize inequalities with explicit error terms and characterize precisely when equalities hold.

\section{Preliminaries and notation}
Let $G=(\V,\E)$ be a finite, simple, undirected graph with $\V=\{1,\dots,n\}$ and $\abs{\E}=m$.
Write $i\sim j$ when $\{i,j\}\in \E$.
Let $\A\in\{0,1\}^{n\times n}$ be the adjacency matrix and $\D=\diag(d_1,\dots,d_n)$ the degree matrix.

\begin{definition}[Two--paths and wedges]
A \emph{two--path} is an ordered triple $(i,j,k)$ of distinct vertices with $i\sim j$ and $j\sim k$.
It is \emph{closed} if $i\sim k$ and \emph{open} otherwise.
\end{definition}

Let $\tau(i)$ denote the number of triangles containing vertex $i$; the total number of triangles is
$\tau = \frac16\Tr(\A^3)$ and $\sum_{i\in\V}\tau(i)=3\tau$ \parencite{newman_2010_networks}.
Let $\pi_2(i)=\binom{d_i}{2}$ be the number of unordered neighbor pairs of $i$.
The (Watts--Strogatz) local clustering coefficient is $C(i)=\tau(i)/\pi_2(i)$ for $d_i\ge 2$ \parencite{watts_strogatz_1998,serrano_boguna_2006_clustering}.

Define the total number of two--paths in $G$ (counting wedges by middle vertex) as
\begin{equation}\label{eq:m2}
m_2 := \sum_{i\in\V}\binom{d_i}{2} = \frac12\sum_{i\in\V} d_i(d_i-1).
\end{equation}
Newman's transitivity (global clustering coefficient) can be written as $3\tau/m_2$ when $m_2>0$ \parencite{newman_2003_structure}.

\section{Incidence and the two--walk operator}
We recall the wedge incidence construction from \parencite{boudourides_2022_note} and develop an operator normalization.

\subsection{Two--incidence and Gram identities}

Let $B\in\{0,1\}^{n\times m}$ be the (unoriented) incidence matrix of $G$, so that $BB^\top=\D+\A$. This identity motivates an analogous incidence-based representation for two--paths (wedges), which we now formalize. To this end, index the set of (unordered) two--paths by $\{1,\dots,m_2\}$.

\begin{definition}[Two--incidence matrix]
The \emph{two--incidence matrix} $B_2\in\{0,1\}^{n\times m_2}$ is defined by
\[
(B_2)_{i,p} =
\begin{cases}
1, & \text{if vertex $i$ is an endpoint of the two--path $p$},\\
0, & \text{otherwise.}
\end{cases}
\]
\end{definition}

\begin{proposition}[Wedge Gram identity]\label{prop:B2}
There exists a diagonal matrix $\D_2=\diag(d_{2,1},\dots,d_{2,n})$ such that
\begin{equation}\label{eq:B2Gram}
B_2B_2^\top = \D_2 + \A^2.
\end{equation}
Moreover, $d_{2,i}=\sum_{j}( \A^2)_{ij}-d_i$ is the number of two--paths incident to $i$ (counting by endpoints), and
$\sum_i d_{2,i}=2m_2$.
\end{proposition}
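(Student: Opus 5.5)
The plan is to compute $B_2B_2^\top$ entrywise, exactly as one verifies $BB^\top=\D+\A$, by treating off-diagonal and diagonal entries separately. A two--path $p$ is the unordered triple with middle vertex $j$ and endpoint set $\{i,k\}$, where $i\ne k$, $i\sim j$ and $j\sim k$. The column of $B_2$ indexed by $p$ therefore has exactly two ones, in rows $i$ and $k$. Hence
\[
(B_2B_2^\top)_{ab}=\sum_p (B_2)_{a,p}(B_2)_{b,p}
\]
counts the two--paths having both $a$ and $b$ as endpoints.

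\emph{Off-diagonal entries.} Take $a\ne b$. A two--path with endpoint set $\{a,b\}$ is determined by its middle vertex $j$, and $j$ must be a common neighbor of $a$ and $b$. Distinctness is automatic, since $j\ne a,b$ in a simple graph. So the count is $|N(a)\cap N(b)|=\sum_l \A_{al}\A_{lb}=(\A^2)_{ab}$. Consequently $B_2B_2^\top-\A^2$ vanishes off the diagonal, i.e.\ it is diagonal. This already gives the existence of a diagonal $\D_2$ with $B_2B_2^\top=\D_2+\A^2$.

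\emph{Diagonal entries.} $(B_2B_2^\top)_{ii}$ is the number of two--paths having $i$ as an endpoint. I would count these by first choosing the middle vertex $l\sim i$ and then the other endpoint $k\sim l$ with $k\ne i$. This gives $\sum_{l\sim i}(d_l-1)$. Using $\sum_j(\A^2)_{ij}=\sum_l \A_{il}d_l$, this equals $\sum_j(\A^2)_{ij}-d_i$, which is the stated formula for $d_{2,i}$.

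\emph{Main obstacle: diagonal bookkeeping.} Since $(\A^2)_{ii}=d_i$, the identity forces the diagonal of $\D_2$ to be $(B_2B_2^\top)_{ii}-d_i$, which is $d_{2,i}-d_i$ rather than $d_{2,i}$. I would therefore state explicitly that $\D_2$ is the diagonal correction $\diag(B_2B_2^\top)-\D$, while $d_{2,i}$ itself is the incidence count $(B_2B_2^\top)_{ii}$. Equivalently, one can read \eqref{eq:B2Gram} with $\A^2$ replaced by its off-diagonal part $\A^2-\D$. In that reading $\D_2=\diag(d_{2,i})$ exactly.

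\emph{Sum of the $d_{2,i}$.} Summing $d_{2,i}$ over $i$ counts each two--path once per endpoint. Every two--path has exactly two endpoints, so $\sum_i d_{2,i}=2m_2$. As a cross-check, $\sum_i\sum_{l\sim i}(d_l-1)=\sum_l d_l(d_l-1)=2m_2$ by \eqref{eq:m2}.
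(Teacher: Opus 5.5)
Your entrywise count is the same argument the paper uses. Off-diagonal entries of $B_2B_2^\top$ are common-neighbor counts, diagonal entries are endpoint incidences, and $\sum_i d_{2,i}=2m_2$ follows by double counting endpoints. You also go further than the paper by deriving $d_{2,i}=\sum_{l\sim i}(d_l-1)=\sum_j(\A^2)_{ij}-d_i$, which the paper's proof only asserts.

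Your diagonal-bookkeeping objection is correct, and it pinpoints a real slip in the paper's proof. The paper sets $\D_2=\diag\bigl((B_2B_2^\top)_{ii}\bigr)$ and concludes $B_2B_2^\top=\D_2+\A^2$, ignoring that $(\A^2)_{ii}=d_i$. Already for the path $1\sim 2\sim 3$ one has $(B_2B_2^\top)_{22}=0$ but $(\A^2)_{22}=2$. So the unique diagonal matrix making the identity true is $\diag(0,-2,0)$, not the endpoint counts $\diag(1,0,1)$, and the statement as written is internally inconsistent.

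Your corrected form $B_2B_2^\top=\D_2+\A^2-\D=\D_2+\W$, with $\D_2=\diag(d_{2,i})$ the endpoint counts, is the right one. With that convention, the later remark in Definition~\ref{def:W} should read $\W=B_2B_2^\top-\D_2$ rather than $\W=B_2B_2^\top-\D_2-\D$.
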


\begin{proof}
Fix $i\neq j$. The $(i,j)$ entry of $B_2B_2^\top$ counts the number of two--paths for which both $i$ and $j$ are endpoints.
Such a two--path exists precisely when $i$ and $j$ have a common neighbor $k$, and each common neighbor contributes exactly one such two--path.
Hence $(B_2B_2^\top)_{ij}=(\A^2)_{ij}$ for $i\neq j$.
The diagonal entry $(B_2B_2^\top)_{ii}$ counts two--paths incident to $i$ as an endpoint; denote this by $d_{2,i}$ and set $\D_2=\diag(d_{2,i})$.
Finally, summing $d_{2,i}$ over $i$ counts each two--path twice (once per endpoint), giving $\sum_i d_{2,i}=2m_2.$
\end{proof}

\subsection{The canonical two--walk operator}

\begin{definition}[Two--walk operator]\label{def:W}
The \emph{two--walk operator} (endpoint co--incidence) is the matrix $\W\in\mathbb{R}^{n\times n}$ defined by
\begin{equation}\label{eq:Wdef}
\W := \A^2 - \D.
\end{equation}
For $i\neq j$, $\W_{ij}$ equals the number of common neighbors of $i$ and $j$.
The diagonal convention in \eqref{eq:Wdef} is chosen so that $\W\onevec = (d_{2,1},\dots,d_{2,n})^\top$.
Proposition~\ref{prop:B2} implies $\W = B_2B_2^\top -\D_2-\D$.
\end{definition}

\section{A unique triadic/open decomposition of the wedge operator}
We now express triadic closure and openness as a unique support--restricted decomposition of $\W$.

\subsection{Edge triangle multiplicity and triadic intensity}

\begin{definition}[Triangle multiplicity]\label{def:triangle_multiplicity}
For an edge $e=\{i,j\}\in\E$, the \emph{triangle multiplicity} $t_e$ is the number of triangles containing $e$.
Equivalently, for $\{i,j\}\in\E$,
\[
t_{\{i,j\}}=(\A^2)_{ij}.
\]
In particular, triangle multiplicity is the restriction of the two--walk operator to edges, i.e., $t_{\{i,j\}}=\W_{ij}$ whenever $\{i,j\}\in\E$.
\end{definition}

\begin{definition}[Triadic and open parts of $\W$]\label{def:TO}
Let $\J=\onevec\onevec^\top$.
Define matrices
\begin{equation}\label{eq:TOdef}
\Tt := \A\circ \W,\qquad
\Oo := (\J-\Id-\A)\circ \W,
\end{equation}
where $\circ$ denotes the Hadamard (entrywise) product.
Here $\Id$ denotes the $n\times n$ identity matrix, so that $\J-\Id-\A$ masks nonedges off the diagonal.
\end{definition}

\begin{theorem}[Canonical wedge decomposition and uniqueness]\label{thm:decomp}
Let $G$ be any simple undirected graph and define $\W$ by \eqref{eq:Wdef}.
Then:
\begin{enumerate}
\item For all $i\neq j$, $\W_{ij}=\Tt_{ij}+\Oo_{ij}$ and $\Tt_{ij}\Oo_{ij}=0$.
\item $\supp(\Tt)\subseteq \E$ (off-diagonal support on edges) and $\supp(\Oo)\subseteq \V\times\V\setminus(\{(i,i): i\in \V\}\cup \E)$ (off-diagonal support on nonedges).
\item (Uniqueness) If $X,Y\in\R^{n\times n}$ satisfy $\W_{ij}=X_{ij}+Y_{ij}$ for all $i\neq j$, with $X_{ij}=0$ whenever $\{i,j\}\notin\E$ and
$Y_{ij}=0$ whenever $\{i,j\}\in\E$, then $X=\Tt$ and $Y=\Oo$ off-diagonal.
\end{enumerate}
Moreover, for $\{i,j\}\in\E$ we have $\Tt_{ij}=t_{\{i,j\}}$ (triangle multiplicity), and for $\{i,j\}\notin\E$, $\Oo_{ij}=\W_{ij}$ counts open two--paths with endpoints $i,j$.
\end{theorem}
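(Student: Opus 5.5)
The whole theorem follows from an entrywise reading of the Hadamard products in \eqref{eq:TOdef}. For $i\neq j$ we have $\J_{ij}=1$ and $(\Id)_{ij}=0$, so the two masks are $\A_{ij}$ and $1-\A_{ij}$. Since $\A_{ij}\in\{0,1\}$, exactly one mask equals $1$ and the other equals $0$. I will therefore split every off-diagonal pair into two cases, $\{i,j\}\in\E$ (so $\A_{ij}=1$) and $\{i,j\}\notin\E$ (so $\A_{ij}=0$), and verify each claim in both cases.

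For item 1, note that $\Tt_{ij}+\Oo_{ij}=(\A_{ij}+1-\A_{ij})\W_{ij}=\W_{ij}$. Also $\Tt_{ij}\Oo_{ij}=\A_{ij}(1-\A_{ij})\W_{ij}^2=0$, because $\A_{ij}^2=\A_{ij}$. For item 2, if $\{i,j\}\notin\E$ then $\Tt_{ij}=0$, and if $\{i,j\}\in\E$ then $\Oo_{ij}=0$. On the diagonal, $(\J-\Id-\A)_{ii}=1-1-0=0$, so $\Oo$ vanishes there. This gives the stated supports; the support statement for $\Tt$ is understood off-diagonal, as the theorem indicates.

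For uniqueness (item 3), suppose $X,Y$ satisfy the hypotheses and fix $i\neq j$. If $\{i,j\}\in\E$, then $Y_{ij}=0$, so $X_{ij}=\W_{ij}=\Tt_{ij}$ and $Y_{ij}=0=\Oo_{ij}$. If $\{i,j\}\notin\E$, then $X_{ij}=0=\Tt_{ij}$ and $Y_{ij}=\W_{ij}=\Oo_{ij}$. An equivalent way to say this: $X-\Tt$ and $Y-\Oo$ sum to zero off the diagonal and have disjoint supports, so both vanish off the diagonal.

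For the ``moreover'' part, take an edge $\{i,j\}\in\E$. Then $\Tt_{ij}=\W_{ij}=(\A^2)_{ij}$, since $\D$ has no off-diagonal entries. This equals $t_{\{i,j\}}$ by Definition~\ref{def:triangle_multiplicity}: each common neighbour $k$ of the adjacent vertices $i,j$ gives exactly one triangle $\{i,j,k\}$, and every triangle containing the edge arises this way. For a nonedge $\{i,j\}\notin\E$ with $i\neq j$, we have $\Oo_{ij}=\W_{ij}$, which is the number of common neighbours $k$. Each such $k$ gives the two-path $(i,k,j)$ with endpoints $i,j$, and this two-path is open because $i\not\sim j$. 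Conversely, every two-path with endpoints $i,j$ has a common neighbour as its middle vertex, so the correspondence is a bijection.

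The only step needing any care is this last counting bijection, which I would not expect to be a real obstacle. The convention to fix is that two-paths are counted as unordered, matching the unordered indexing used for $B_2$ in Proposition~\ref{prop:B2}. With that convention, $(i,k,j)$ and $(j,k,i)$ are the same two-path, so each common neighbour contributes exactly one.
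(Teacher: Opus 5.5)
Your proposal is correct and follows the paper's proof essentially verbatim. It uses the same off-diagonal case split on whether $\{i,j\}\in\E$, reads $\Tt_{ij}$ and $\Oo_{ij}$ directly off the Hadamard masks, and forces the values of $X_{ij}$ and $Y_{ij}$ from the support constraints in the same case-by-case way. (Your caveat about the diagonal of $\Tt$ is unnecessary: $\Tt_{ii}=\A_{ii}\W_{ii}=0$ for a loopless graph, so $\supp(\Tt)\subseteq\E$ holds literally.)
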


\begin{proof}
For $i\neq j$, either $\{i,j\}\in\E$ or not.
If $\{i,j\}\in\E$, then $(\J-\Id-\A)_{ij}=0$ and $\A_{ij}=1$, hence $\Tt_{ij}=\W_{ij}$ and $\Oo_{ij}=0$.
If $\{i,j\}\notin\E$, then $\A_{ij}=0$ and $(\J-\Id-\A)_{ij}=1$, hence $\Tt_{ij}=0$ and $\Oo_{ij}=\W_{ij}$.
This proves (1)--(2) and the interpretation.
For uniqueness, fix $i\neq j$.
If $\{i,j\}\in\E$, the support constraints force $Y_{ij}=0$, hence $X_{ij}=\W_{ij}=\Tt_{ij}$.
If $\{i,j\}\notin\E$, the support constraints force $X_{ij}=0$, hence $Y_{ij}=\W_{ij}=\Oo_{ij}$.
\end{proof}

\subsection{A sharp extremal characterization via open--wedge mass}

\begin{definition}[Open--wedge mass]\label{def:omega}
The \emph{open--wedge mass} (global openness) of a graph $G$ is the scalar
\begin{equation}\label{eq:omega}
\omega(G) := m_2 - 3\tau,
\end{equation}
where $m_2=\sum_{i\in\V}\binom{d_i}{2}$ is the total number of two--paths (wedges) counted by middle vertex, and $\tau$ is the number of triangles in $G$.
\end{definition}

This definition is natural because it separates total wedge mass into its closed (triadic) and open components. Indeed, using $\sum_e t_e = 3\tau$ (each triangle contributes three closed wedges, one per choice of middle vertex), we have
\[
m_2 = 3\tau + (\text{number of open wedges}),
\]
so that $\omega(G)=m_2-3\tau$ is exactly the total number of open wedges counted by middle vertex.
In particular, this quantity admits an equivalent expression in terms of the open component $\Oo$ in the decomposition $\W=\Tt+\Oo$ (Theorem~\ref{thm:decomp}), as shown below.
In particular, this quantity admits an equivalent expression in terms of the open component $\Oo$ in the decomposition $\W=\Tt+\Oo$, as shown below.

\begin{definition}[$P_3$-free graph]
A graph is called \emph{$P_3$-free} if it contains no induced path on three vertices.
Equivalently, every connected component of a $P_3$-free graph is a clique.
\end{definition}

\begin{theorem}[Sharp openness inequality and equality graphs]\label{thm:clustergraph}
For every simple graph $G$, $\omega(G)\ge 0$.
Moreover, $\omega(G)=0$ if and only if $G$ is a disjoint union of cliques (equivalently, $G$ is $P_3$--free).
\end{theorem}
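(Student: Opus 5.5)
The plan is to express $\omega(G)$ as a count of open wedges, which gives nonnegativity at once, and then to read off the equality case. For each vertex $j$, the $\binom{d_j}{2}$ unordered pairs $\{i,k\}$ of neighbours of $j$ split into two kinds: closed pairs with $i\sim k$, and open pairs with $i\not\sim k$. The closed pairs at $j$ are exactly the edges inside $N(j)$, and each such edge corresponds to a triangle through $j$. So there are $\tau(j)$ of them. Summing over $j$ and using $\sum_j\tau(j)=3\tau$ from the preliminaries gives
\[
\omega(G)=m_2-3\tau=\sum_{j\in\V}\Bigl(\binom{d_j}{2}-\tau(j)\Bigr),
\]
where the $j$-th summand is the number of open wedges centred at $j$. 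Each summand is a cardinality, so $\omega(G)\ge 0$. As a cross-check, this count also equals $\frac12\sum_{i\neq k}\Oo_{ik}$ by Theorem~\ref{thm:decomp}, since $\Oo_{ik}$ counts common neighbours of the nonadjacent pair $i,k$. The main argument does not depend on this.

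For the equality case, $\omega(G)=0$ holds exactly when every summand vanishes, since all summands are nonnegative. That means every neighbourhood $N(j)$ is a clique. Equivalently, there is no triple $i\sim j\sim k$ with $i\neq k$ and $i\not\sim k$, which is the statement that $G$ has no induced $P_3$.

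It remains to show that $G$ is $P_3$-free if and only if $G$ is a disjoint union of cliques. The paper states this equivalence in the definition, but I would prove it to be safe.
\begin{itemize}
\item If $G$ is a disjoint union of cliques, then a path $i\sim j\sim k$ lies inside a single clique, so $i\sim k$.
\item Conversely, assume $G$ is $P_3$-free. Take two vertices $u,v$ in the same connected component, and let $u=x_0,x_1,\dots,x_r=v$ be a shortest path between them. If $r\ge 2$, then $x_0\sim x_1\sim x_2$ with $x_0\not\sim x_2$ by minimality, which is an induced $P_3$. Hence $r\le 1$, so any two vertices in a component are adjacent, and each component is a clique.
\end{itemize}

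I expect the only step needing care to be the identification of closed wedges centred at $j$ with $\tau(j)$. It relies on the bijection between triangles containing $j$ and edges within $N(j)$, and it requires that both $m_2$ and $3\tau$ count wedges by their middle vertex, so that the subtraction is exact. The rest is routine.
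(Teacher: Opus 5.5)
Your proof is correct and follows the paper's argument: write $\omega(G)=m_2-3\tau$ as the number of open wedges counted by middle vertex to get nonnegativity, then observe that $\omega(G)=0$ means there is no induced $P_3$. The only difference is that you prove the equivalence between $P_3$-free graphs and disjoint unions of cliques directly, via the shortest-path argument, whereas the paper cites it as a standard fact.
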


\begin{proof}

By definition, every two--path (wedge) is either open or closed. Counting wedges by middle vertex gives $m_2$ total wedges, and each triangle contributes exactly three closed wedges (one for each choice of middle vertex). Hence
\[
m_2 = 3\tau + (\text{number of open wedges}),
\]
so that $\omega(G)=m_2-3\tau$ equals the total number of open wedges, and in particular $\omega(G)\ge 0$.

If $G$ is a disjoint union of cliques, then every wedge is closed, so $\omega(G)=0$.
Conversely, if $\omega(G)=0$ then there are no open wedges, i.e., no induced path on three vertices.
Thus $G$ is $P_3$--free.
A standard characterization of $P_3$--free graphs is that each connected component is a clique (sometimes called a \emph{cluster graph}); see, e.g., \citep[Ch.~1]{diestel_2025_graphtheory}.
\end{proof}

\section{Open wedge matrix and redundancy as nonedge common neighbors}\label{sec:openwedge}
The open part $\Oo$ from Definition~\ref{def:TO} records \emph{nonedge} pairs that share neighbors.
This gives a direct bridge to structural holes: nonadjacent alters in an ego neighborhood create open wedges, and $\Oo$ aggregates these across the whole graph.

\begin{theorem}[Open wedge mass as a nonedge sum]\label{thm:omega_sumO}
Let $\Oo$ be defined by \eqref{eq:TOdef}. Then the open--wedge mass admits the equivalent expression in terms of nonedge entries of $\W$:
\begin{equation}\label{eq:omega_sumO}
\omega(G)=m_2-3\tau = \sum_{\substack{i<j\\ i\not\sim j}} (\A^2)_{ij}
= \frac12\sum_{\substack{i\neq j\\ i\not\sim j}} \W_{ij}.
\end{equation}
\end{theorem}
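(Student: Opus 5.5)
The proof is a double count of open wedges, organized by endpoint pair instead of by middle vertex. By the proof of Theorem~\ref{thm:clustergraph}, $\omega(G)=m_2-3\tau$ equals the number of open wedges when wedges are counted by middle vertex. In other words, it counts pairs $(j,\{i,k\})$ where $j$ is a vertex, $\{i,k\}$ is an unordered pair of distinct neighbors of $j$, and $i\not\sim k$. So the first equality in \eqref{eq:omega_sumO} reduces to counting this same set of triples in a different way.

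Next I regroup the triples by their unordered endpoint pair. For a fixed nonadjacent pair $i<k$, the admissible middle vertices are exactly the common neighbors of $i$ and $k$. There are $(\A^2)_{ik}$ of these, and each one is automatically distinct from $i$ and $k$, since the graph is simple. Conversely, a wedge whose endpoints are adjacent is closed, so adjacent pairs contribute nothing. Summing over nonadjacent pairs $i<k$ therefore gives
\[
\omega(G)=\sum_{\substack{i<j\\ i\not\sim j}}(\A^2)_{ij}.
\]

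The last equality uses Definition~\ref{def:W}: for $i\neq j$ we have $\W_{ij}=(\A^2)_{ij}$, because $\D$ is diagonal. Since $\W$ is symmetric, the sum over ordered pairs $i\neq j$ with $i\not\sim j$ is twice the sum over $i<j$. This gives the factor $\tfrac12$. Equivalently, by Theorem~\ref{thm:decomp}, the sum equals $\tfrac12\sum_{i\neq j}\Oo_{ij}$.

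The only delicate point is checking that the regrouping is a bijection. One must confirm that each open wedge is counted exactly once under the unordered-pair convention and that no diagonal terms slip in. This follows because the two sums are indexed over $i<j$ and $i\neq j$ respectively, so the diagonal of $\W$, which carries the $-\D$ correction, never enters.
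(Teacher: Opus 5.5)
Your proposal is correct and follows the paper's own argument. Both proofs identify $\omega(G)=m_2-3\tau$ with the number of open wedges counted by middle vertex, recount the same set by nonadjacent endpoint pair (each pair contributing its $(\A^2)_{ij}$ common neighbors), and obtain the last equality from $\W_{ij}=(\A^2)_{ij}$ off the diagonal together with symmetry.
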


\begin{proof}
For a fixed unordered pair $\{i,j\}$ with $i\not\sim j$, the entry $\W_{ij}=(\A^2)_{ij}$ counts the number of common neighbors $k$ of $i$ and $j$.
Each such $k$ yields an open two--path $(i,k,j)$ whose endpoints are precisely $i$ and $j$.
Thus $\sum_{i<j,\, i\not\sim j}\W_{ij}$ counts each open wedge exactly once.
Since $(\A^2)_{ij}=\W_{ij}$ for $i\neq j$, the first equality follows.
The second equality is just symmetrization over $i\neq j$.
Finally, $\omega(G)$ equals the total number of open wedges (counted by middle vertex), while the sum above counts the same wedges by their endpoints; these are equivalent enumerations of the same set of open wedges, so the identity holds.
\end{proof}

\begin{remark}[Ego redundancy as a row--sum]
For a vertex $v$, the number of open wedges \emph{through} $v$ is $\binom{d_v}{2}-\tau(v)$.
Theorem~\ref{thm:omega_sumO} is the global version of the same idea: $\Oo$ aggregates open wedges by \emph{endpoints} rather than by middle vertex.
\end{remark}

\section{Triangle incidence, edge multiplicities, and operator factorizations}
The decomposition of Theorem~\ref{thm:decomp} isolates the \emph{support} of closure versus openness.
To study closure more finely, we pass from vertex wedges to \emph{edge triangle multiplicities} and an incidence factorization.

\subsection{Edge--triangle multiplicity matrix}

\begin{definition}[Edge triangle multiplicity matrix]
The \emph{edge triangle multiplicity matrix} $M_\triangle\in\R^{n\times n}$ is defined by
\[
(M_\triangle)_{ij} :=
\begin{cases}
(\A^2)_{ij}, & \text{if } i\sim j,\\
0, & \text{otherwise.}
\end{cases}
\]
\end{definition}

Thus $M_\triangle=\Tt$ off-diagonal, and its nonzero entries are exactly the triangle multiplicities $t_{\{i,j\}}$.

\begin{remark}
The matrix $M_\triangle$ coincides with the triadic matrix $\Tt$ defined in
Definition~\ref{def:TO}. Indeed, for $\{i,j\}\in\E$ we have
$(M_\triangle)_{ij}=(\A^2)_{ij}=t_{\{i,j\}}=\Tt_{ij}$, while for
$\{i,j\}\notin\E$ both matrices vanish, and both have zero diagonal.
Thus $M_\triangle=\Tt$. We retain the notation $M_\triangle$ to emphasize its
interpretation as the edge triangle multiplicity matrix, isolating the
triangle-supported part of the two-walk operator.
\end{remark}

A basic identity connects $M_\triangle$ to vertex triangle counts.

\begin{proposition}[Edge multiplicities sum to triangles]\label{prop:edgesum}
The following identities hold: the first globally over all edges, and the second for each vertex $i\in\V$:
\begin{equation}\label{eq:edgesum}
\sum_{\{i,j\}\in\E} t_{\{i,j\}} = 3\tau,
\end{equation}
and for every vertex $i\in\V$,
\begin{equation}
\sum_{j\sim i} t_{\{i,j\}} = 2\tau(i).
\end{equation}
\end{proposition}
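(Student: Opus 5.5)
Both identities will follow from a double-counting argument on incidences between triangles and edges (respectively, triangles and edges at a fixed vertex), using the characterization $t_{\{i,j\}}=(\A^2)_{ij}$ for $\{i,j\}\in\E$ from Definition~\ref{def:triangle_multiplicity}. The approach is purely combinatorial; no spectral or operator input is needed beyond reading off common-neighbor counts.

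For the global identity \eqref{eq:edgesum}, I will consider the set of pairs $(e,\Delta)$ where $\Delta$ is a triangle of $G$ and $e\in\E$ is one of its edges. Summing over edges first gives, for each $e$, exactly $t_e$ triangles containing it, so the number of pairs is $\sum_{e\in\E}t_e$. Summing over triangles first, each triangle has exactly three edges, so the number of pairs is $3\tau$. The two counts must agree. As a consistency check with the operator language, this also reads $\frac12\sum_{i\neq j}\Tt_{ij}=\frac12\Tr(\A^3)=3\tau$, since $\Tt_{ij}=\A_{ij}(\A^2)_{ij}$ and $\sum_{i,j}\A_{ij}(\A^2)_{ji}=\Tr(\A^3)$.

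For the local identity, I will fix $i$ and count pairs $(j,\Delta)$ with $\Delta$ a triangle containing $i$ and $j$ a vertex of $\Delta$ other than $i$. Such a $j$ is automatically a neighbor of $i$, and for a fixed $j\sim i$ the triangles containing both $i$ and $j$ are exactly those containing the edge $\{i,j\}$. There are $t_{\{i,j\}}$ of them, so the count equals $\sum_{j\sim i}t_{\{i,j\}}$. On the other hand, each triangle through $i$ contributes exactly two choices of $j$, giving $2\tau(i)$. Equivalently, $\sum_{j}\A_{ij}(\A^2)_{ij}=(\A^3)_{ii}=2\tau(i)$, and this matrix form is the one I would write down.

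There is no real obstacle here. The only point needing care is that the simple-graph hypothesis ensures a triangle is determined by its vertex set, so no triangle is counted twice through different labellings. As a sanity check, summing the local identity over $i$ gives $2\sum_e t_e=6\tau$, which is consistent with the global one.
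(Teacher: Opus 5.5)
Your double-counting argument is correct and matches the paper's proof. Each triangle contributes $1$ to the multiplicities of its three edges, and $1$ to $t_{\{i,j\}}$ for exactly two neighbors $j$ of each of its vertices $i$. The matrix checks via $\Tr(\A^3)=6\tau$ and $(\A^3)_{ii}=2\tau(i)$ are also correct, but the argument does not need them.
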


\begin{proof}
Each triangle has exactly three edges, so it contributes $1$ to the multiplicity of each of its three edges, giving the first identity.
Fixing a vertex $i$, each triangle containing $i$ has exactly two edges incident to $i$, hence contributes $1$ to $t_{\{i,j\}}$ for each of those two edges.
Summing over neighbors $j\sim i$ counts each triangle containing $i$ twice, giving the second identity.
\end{proof}

\subsection{Triangle incidence and a Gram factorization}
Let $\mathcal{T}$ denote the set of triangles in $G$ (unordered $3$--cycles), $\abs{\mathcal{T}}=\tau$.
Define the \emph{triangle incidence matrix} $B_\triangle\in\{0,1\}^{n\times \tau}$ by
$(B_\triangle)_{i,t}=1$ if vertex $i$ lies in triangle $t$ and $0$ otherwise.

\begin{theorem}[Triangle Gram factorization]\label{thm:trianglegram}
The matrix $B_\triangle B_\triangle^\top$ satisfies
\[
(B_\triangle B_\triangle^\top)_{ii}=\tau(i),\qquad
(B_\triangle B_\triangle^\top)_{ij}=
\begin{cases}
t_{\{i,j\}}, & i\sim j,\\
0, & i\not\sim j.
\end{cases}
\]
Equivalently,
\begin{equation}\label{eq:trianglegram}
B_\triangle B_\triangle^\top = \diag(\tau(1),\dots,\tau(n)) + M_\triangle.
\end{equation}
\end{theorem}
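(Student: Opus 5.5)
The plan is to compute the Gram matrix entrywise, in the same way as the proof of Proposition~\ref{prop:B2}, and then identify the off-diagonal part with $M_\triangle$ using Definition~\ref{def:triangle_multiplicity}. By definition of matrix multiplication,
\[
(B_\triangle B_\triangle^\top)_{ij}=\sum_{t\in\mathcal{T}}(B_\triangle)_{i,t}(B_\triangle)_{j,t},
\]
and since each entry is $0$ or $1$, every summand is $1$ exactly when both $i$ and $j$ lie in triangle $t$, and $0$ otherwise. So the $(i,j)$ entry counts the triangles that contain both $i$ and $j$. Everything then reduces to identifying this count in the three cases $i=j$; $i\neq j$ with $i\sim j$; and $i\neq j$ with $i\not\sim j$.

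On the diagonal, $(B_\triangle)_{i,t}^2=(B_\triangle)_{i,t}$, so $(B_\triangle B_\triangle^\top)_{ii}$ is the number of triangles containing $i$, which is $\tau(i)$ by definition. For $i\neq j$ with $i\not\sim j$, no triangle can contain both vertices, because any two vertices of a triangle are adjacent; hence the entry is $0$. For $i\sim j$, triangles containing both $i$ and $j$ are exactly the triangles containing the edge $\{i,j\}$. These correspond bijectively to the third vertices $k$ with $k\sim i$ and $k\sim j$, i.e.\ to the common neighbors of $i$ and $j$. Their number is $t_{\{i,j\}}=(\A^2)_{ij}$ by Definition~\ref{def:triangle_multiplicity}.

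Finally, I will compare with the definition of $M_\triangle$: its diagonal is zero (the definition sets $(M_\triangle)_{ii}=0$ since $i\not\sim i$ in a simple graph), its edge entries are $(\A^2)_{ij}=t_{\{i,j\}}$, and its nonedge entries are $0$. Adding $\diag(\tau(1),\dots,\tau(n))$ then matches all three cases entrywise, which gives \eqref{eq:trianglegram}.

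There is no real obstacle here. The only points needing care are the bijection between triangles through an edge and common neighbors, which uses simplicity of $G$ so that the third vertex is distinct from $i$ and $j$ and each triangle is counted once, and confirming that $M_\triangle$ has zero diagonal so that the diagonal term is exactly $\diag(\tau(i))$.
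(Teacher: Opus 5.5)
Your proposal is correct and follows essentially the same route as the paper: both read $(B_\triangle B_\triangle^\top)_{ij}$ as the number of triangles containing both $i$ and $j$, and then split into the diagonal, adjacent, and nonadjacent cases. Your extra checks are details the paper leaves implicit and do not change the argument: the bijection between triangles through $\{i,j\}$ and common neighbors, and the zero diagonal of $M_\triangle$.
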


\begin{proof}
The diagonal entry counts the number of triangles containing vertex $i$, which is $\tau(i)$ by definition.
For $i\neq j$, the $(i,j)$ entry counts triangles containing both vertices $i$ and $j$.
Such a triangle exists if and only if $i\sim j$ and the edge $\{i,j\}$ belongs to that triangle; the count is precisely $t_{\{i,j\}}$ when $i\sim j$, and $0$ otherwise.
\end{proof}

\begin{remark}[Why this is more than clustering]
$B_\triangle$ and \eqref{eq:trianglegram} retain the full distribution of triangle overlap patterns: two vertices can be strongly coupled by many shared triangles even if their local clustering coefficients are modest.
This allows one to study closure using linear algebra (spectra, norms, ranks) rather than scalars alone.
\end{remark}

\section{Spectral bounds and extremal structure}
We record two bounds that connect triadicity to classical spectral graph theory.
They are not ``new'' in isolation, but they are essential for a proof--oriented paper because they provide rigorous control of the operators we introduce.
Throughout, let $\lambda_1\ge \cdots\ge \lambda_n$ be the eigenvalues of $\A$.

\begin{proposition}[A universal upper bound on triangles]\label{prop:tri_spectral_upper}
For every graph $G$,
\begin{equation}\label{eq:tri_upper}
\tau = \frac{1}{6}\Tr(\A^3) = \frac{1}{6}\sum_{i=1}^n \lambda_i^3 \le \frac{\lambda_1}{6}\sum_{i=1}^n \lambda_i^2 = \frac{\lambda_1}{3}m.
\end{equation}
Equality holds if and only if all nonprincipal eigenvalues satisfy $\lambda_i^3=\lambda_1\lambda_i^2$, i.e., $\lambda_i\in\{0,\lambda_1\}$ for $i\ge 2$.
\end{proposition}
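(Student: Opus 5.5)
The proof follows the classical spectral argument. First I establish the trace identities. Since $\A$ is real symmetric, it is orthogonally diagonalizable with real eigenvalues $\lambda_1\ge\cdots\ge\lambda_n$, so $\Tr(\A^k)=\sum_i\lambda_i^k$ for every $k$. The diagonal entry $(\A^3)_{ii}$ counts closed walks of length three at $i$. Each triangle through $i$ gives exactly two such walks (two orientations), so $\Tr(\A^3)=\sum_i 2\tau(i)=6\tau$, using $\sum_i\tau(i)=3\tau$ from the preliminaries. Similarly, $(\A^2)_{ii}=d_i$, so $\sum_i\lambda_i^2=\Tr(\A^2)=\sum_i d_i=2m$. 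These give the two outer equalities in \eqref{eq:tri_upper}.

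The inequality comes from a termwise comparison:
\[
\lambda_i^3=\lambda_i\cdot\lambda_i^2\le \lambda_1\lambda_i^2 \qquad (1\le i\le n).
\]
This holds because $\lambda_i\le\lambda_1$ and $\lambda_i^2\ge 0$. It does not use $|\lambda_i|\le\lambda_1$, although that also holds by Perron--Frobenius. Summing over $i$ and dividing by $6$ gives $\tau\le\frac{\lambda_1}{6}\cdot 2m=\frac{\lambda_1}{3}m$.

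For the equality case, note that the total difference $\sum_i(\lambda_1\lambda_i^2-\lambda_i^3)=\sum_i\lambda_i^2(\lambda_1-\lambda_i)$ is a sum of nonnegative terms. It therefore vanishes if and only if every term vanishes. For each $i$, $\lambda_i^2(\lambda_1-\lambda_i)=0$ means exactly that $\lambda_i\in\{0,\lambda_1\}$. The $i=1$ term is automatically zero, so the condition only concerns $i\ge 2$, as stated.

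I expect the main obstacle to be bookkeeping rather than mathematics. It has two parts:
\begin{enumerate}
\item One must state correctly that $\Tr(\A^3)=6\tau$, with the factor $6$ counting $3$ starting vertices times $2$ orientations.
\item One must make sure the pointwise inequality is used without a sign assumption on $\lambda_i$. The factor $\lambda_i^2\ge0$ takes care of negative eigenvalues.
\end{enumerate}
I would also remark, optionally, that the equality condition can be made combinatorial. It forces all nonzero eigenvalues to equal $\lambda_1$, and since $\Tr\A=0$ this forces $\lambda_1=0$, i.e.\ $\A=0$ or $m=0$. Equality therefore holds only for edgeless graphs, where both sides are zero. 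This refinement is not needed for the statement as written, so I would mention it only as a remark.
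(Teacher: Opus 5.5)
Your proof is correct and is essentially the paper's argument: the termwise bound $\lambda_i^3\le\lambda_1\lambda_i^2$ is summed against the trace identities $\Tr(\A^3)=6\tau$ and $\Tr(\A^2)=2m$, and the equality case is read off from the vanishing of the nonnegative terms $\lambda_i^2(\lambda_1-\lambda_i)$; you also prove the trace identities that the paper simply calls standard. Your closing remark is correct and worth keeping: since $\Tr(\A)=0$, the equality condition forces $\lambda_1=0$, so equality holds only for edgeless graphs, a fact the paper's statement leaves implicit.
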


\begin{proof}
The trace identities are standard \parencite{brouwer_haemers_2012,newman_2010_networks}.
Since $\lambda_1\ge \lambda_i$ for all $i$, we have $\lambda_i^3\le \lambda_1\lambda_i^2$ for each $i$.
Summing yields $\sum_i\lambda_i^3\le \lambda_1\sum_i\lambda_i^2$.
Finally, $\sum_i\lambda_i^2=\Tr(\A^2)=2m$.
The equality characterization follows from entrywise equality in $\lambda_i^3\le \lambda_1\lambda_i^2$.
\end{proof}

\begin{proposition}[An operator--norm bound for closure mass]\label{prop:normbound}
Let $\|\cdot\|_F$ and $\|\cdot\|_2$ denote the Frobenius and spectral norms.
Then
\[
\|\Tt\|_F^2 = \sum_{\{i,j\}\in\E} t_{\{i,j\}}^2 \le \|\A^2\|_F^2 = \sum_{i=1}^n \lambda_i^4 \le n\lambda_1^4.
\]
Here $\|\A^2\|_F^2=\operatorname{Tr}(\A^4)=\sum_{i=1}^n \lambda_i^4$ since $\A^2$ is symmetric.
In particular, closure concentrated on a small set of edges forces large $\|\Tt\|_F$ and therefore a large spectral radius.
\end{proposition}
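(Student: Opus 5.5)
The chain has three links, and I will prove them in order. First, the equality $\|\Tt\|_F^2=\sum_{\{i,j\}\in\E} t_{\{i,j\}}^2$. Second, the inequality $\|\Tt\|_F^2\le\|\A^2\|_F^2$. Third, the spectral identity and the final bound.

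For the first link I will use Theorem~\ref{thm:decomp} and the remark identifying $\Tt=M_\triangle$. Thus $\Tt$ has zero diagonal, and its off-diagonal entries are $t_{\{i,j\}}$ on edges and $0$ on nonedges. Summing squares over ordered pairs counts each edge twice, so $\|\Tt\|_F^2=2\sum_{\{i,j\}\in\E}t_{\{i,j\}}^2$. This means the displayed equality holds if the right-hand sum runs over ordered pairs $(i,j)$ with $i\sim j$. I will state this convention explicitly, or else carry the factor $2$ through. Either way the inequality that follows is unaffected, since I will prove it directly in the ordered form.

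For the second link I will compare entries. For $i\neq j$ we have $\Tt_{ij}=\A_{ij}(\A^2)_{ij}$, and $(\A^2)_{ij}\ge 0$, so $0\le\Tt_{ij}\le(\A^2)_{ij}$. On the diagonal, $\Tt_{ii}=0\le(\A^2)_{ii}=d_i$. Squaring and summing these entrywise bounds gives $\|\Tt\|_F^2\le\|\A^2\|_F^2$. The same argument shows that the unordered edge sum is at most $\tfrac12\|\A^2\|_F^2\le\|\A^2\|_F^2$, so the inequality holds under either reading of the first equality.

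For the third link I will use the spectral decomposition. Since $\A$ is real symmetric, $\A=U\Lambda U^\top$ with $U$ orthogonal. Then $\A^2$ is symmetric, so $\|\A^2\|_F^2=\Tr\big((\A^2)^\top\A^2\big)=\Tr(\A^4)=\sum_i\lambda_i^4$. Finally, each $|\lambda_i|\le\lambda_1$, because Perron--Frobenius for the nonnegative matrix $\A$ gives $\lambda_1=\rho(\A)$. Hence $\lambda_i^4\le\lambda_1^4$, and summing over $n$ terms yields $\sum_i\lambda_i^4\le n\lambda_1^4$.

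The closing sentence of the statement, that concentrated closure forces a large spectral radius, then follows by reading the chain backwards: $\lambda_1\ge(\|\Tt\|_F^2/n)^{1/4}$. The only real obstacle is bookkeeping. One must check the ordered-versus-unordered convention in the first equality and invoke Perron--Frobenius for the bound $|\lambda_i|\le\lambda_1$. Both inequalities themselves are immediate from the entrywise bound and the trace identity.
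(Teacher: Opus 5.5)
Your proof is correct and follows the paper's own argument: the entrywise domination $0\le \Tt_{ij}\le(\A^2)_{ij}$, the trace identity $\|\A^2\|_F^2=\Tr(\A^4)=\sum_i\lambda_i^4$, and the bound $|\lambda_i|\le\lambda_1$. You are also right that, with the sum taken over unordered edges, the first displayed equality should read $\|\Tt\|_F^2=2\sum_{\{i,j\}\in\E}t_{\{i,j\}}^2$, a factor the paper's proof passes over; and your explicit appeal to Perron--Frobenius supplies the justification of $|\lambda_i|\le\lambda_1$, which the paper leaves implicit.
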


\begin{proof}
The first identity is the definition of $\Tt$ as edge--supported entries equal to $t_{\{i,j\}}$.
Entrywise, $\Tt$ is obtained from $\A^2$ by zeroing nonedges, hence $\|\Tt\|_F\le \|\A^2\|_F$.
Finally, $\|\A^2\|_F^2=\Tr(\A^4)=\sum_i\lambda_i^4\le n\lambda_1^4$.
\end{proof}

\begin{remark}[Extremal intuition]
Theorem~\ref{thm:clustergraph} shows that \emph{all} wedges are closed exactly for cluster graphs.
Proposition~\ref{prop:tri_spectral_upper} complements this by controlling the total number of triangles by $(m,\lambda_1)$, which is useful when comparing graphs of different sizes.
\end{remark}

\section{Clustered and traversing vertices}
The note \parencite{boudourides_2022_note} introduces \emph{clustered} vertices (those in at least one triangle) and \emph{traversing} vertices (triangle--free),
and discusses bridges.
The bridge statement must be handled carefully: triangle--free does \emph{not} imply all incident edges are bridges.
The correct local characterization is purely neighborhood--based.

\begin{definition}[Clustered and traversing vertices]
A vertex $v$ is \emph{clustered} if $\tau(v)>0$, and \emph{traversing} if $\tau(v)=0$.
Write $\V_{\mathrm{cl}}$ and $\V_{\mathrm{tr}}$ for the corresponding sets.
\end{definition}

\begin{proposition}[Equivalent traversing characterizations]\label{prop:traversing}
For a vertex $v$, the following are equivalent:
\begin{enumerate}
\item $v$ is traversing, i.e., $\tau(v)=0$.
\item Let $N(v)=\{u\in\V : u\sim v\}$ denote the (open) neighborhood of $v$.
The induced subgraph $G[N(v)]$ has no edges.
\item For every pair of distinct neighbors $x,y\in N(v)$, we have $x\not\sim y$.
\end{enumerate}
\end{proposition}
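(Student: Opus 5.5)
The plan is to prove the cycle of implications (1)$\Rightarrow$(3)$\Rightarrow$(2)$\Rightarrow$(1). The key tool is the elementary fact that triangles through $v$ correspond bijectively to edges of the induced subgraph $G[N(v)]$. Concretely, a triangle $\{v,x,y\}$ containing $v$ has $x,y\in N(v)$ distinct with $x\sim y$. Conversely, any edge $\{x,y\}$ of $G[N(v)]$ gives the triangle $\{v,x,y\}$, since $G$ is simple and so $x,y\neq v$. Hence
\[
\tau(v)=\abs{\E(G[N(v)])},
\]
which is also the diagonal entry $(B_\triangle B_\triangle^\top)_{vv}$ of Theorem~\ref{thm:trianglegram}. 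I will establish this identity first, since every implication then reduces to it.

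Given the identity, the implications are immediate. Since $\tau(v)$ counts edges of $G[N(v)]$, we get $\tau(v)=0$ if and only if $G[N(v)]$ has no edges, which is (1)$\Leftrightarrow$(2). For (2)$\Leftrightarrow$(3), an induced subgraph contains exactly the edges of $G$ between its vertices. So $G[N(v)]$ is edgeless precisely when no two distinct $x,y\in N(v)$ satisfy $x\sim y$.

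As an optional algebraic cross-check, Proposition~\ref{prop:edgesum} gives $\sum_{j\sim v}t_{\{v,j\}}=2\tau(v)$. Therefore $\tau(v)=0$ if and only if $(\A^2)_{vj}=0$ for every neighbor $j$ of $v$, that is, no neighbor of $v$ has a common neighbor with $v$. This is again condition (3).

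No step is a genuine obstacle. The only point needing care is the bijection, where one must check that distinct unordered neighbor pairs give distinct triangles and that $x\neq y$. The case $d_v\le 1$ is covered trivially, since then $N(v)$ has no pairs and $\tau(v)=0$.
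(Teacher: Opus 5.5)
Your proof is correct and follows essentially the same route as the paper, which argues in one line that $v$ lies in a triangle if and only if two of its neighbors are adjacent. Your bijection $\tau(v)=\abs{\E(G[N(v)])}$ is the counting version of that same observation, and you also make explicit the (2)$\Leftrightarrow$(3) step that the paper leaves implicit.
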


\begin{proof}
$v$ lies in a triangle if and only if two of its neighbors are adjacent.
Thus $\tau(v)=0$ if and only if no edge exists among vertices in $N(v)$, proving the equivalence.
\end{proof}

\section{Combinatorial bounds for clustered structure and ego domination}\label{sec:bounds}
We collect two elementary but useful bounds that turn qualitative statements about ``clustered'' vertices into quantitative lemmas
that can be used later in proofs.

\begin{proposition}[Clustered vertices versus triangle count]\label{prop:Vc_tau}
If $\tau>0$ then $3\le \abs{\V_{\mathrm{cl}}}\le 3\tau$.
\end{proposition}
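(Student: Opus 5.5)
We prove both inequalities by expressing $\V_{\mathrm{cl}}$ as the union of the vertex sets of the triangles of $G$. By definition, $v\in\V_{\mathrm{cl}}$ if and only if $\tau(v)>0$, that is, $v$ lies in at least one triangle. Writing $\mathcal{T}$ for the set of triangles as in the triangle Gram factorization (Theorem~\ref{thm:trianglegram}), and $V(t)$ for the three vertices of a triangle $t$, we get
\[
\V_{\mathrm{cl}}=\bigcup_{t\in\mathcal{T}} V(t).
\]
Equivalently, $\V_{\mathrm{cl}}$ is the set of rows of $B_\triangle$ that are not identically zero.

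For the lower bound, the hypothesis $\tau>0$ gives some triangle $t_0\in\mathcal{T}$. Its three vertices are distinct, since $G$ is simple. Each of them lies in $t_0$, so all three are clustered, and hence $\abs{\V_{\mathrm{cl}}}\ge\abs{V(t_0)}=3$.

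For the upper bound, apply the union bound to the displayed union:
\[
\abs{\V_{\mathrm{cl}}}\le\sum_{t\in\mathcal{T}}\abs{V(t)}=3\abs{\mathcal{T}}=3\tau.
\]
A double-counting check gives the same result. Summing the diagonal of \eqref{eq:trianglegram} yields $\sum_v\tau(v)=3\tau$. Every clustered vertex contributes at least $1$ to this sum, so $\abs{\V_{\mathrm{cl}}}\le\sum_{v\in\V_{\mathrm{cl}}}\tau(v)=3\tau$.

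No step is a genuine obstacle. The only point needing care is the identification of $\V_{\mathrm{cl}}$ with the union of triangle vertex sets, which is immediate from $\tau(v)>0$. Both bounds are sharp. For the lower bound, $G=K_4$ has $\abs{\V_{\mathrm{cl}}}=4$ with $\tau=4$, and a single triangle gives $3=3\tau$, so equality holds in both bounds at once. For the upper bound, $\tau$ vertex-disjoint triangles give $\abs{\V_{\mathrm{cl}}}=3\tau$.
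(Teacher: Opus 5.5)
Your proof is correct and takes the paper's route. The lower bound comes from the three distinct vertices of one triangle. Your double-counting check ($\sum_v\tau(v)=3\tau$ with $\tau(v)\ge 1$ on $\V_{\mathrm{cl}}$) is exactly the paper's upper-bound argument, and your union bound over $\mathcal{T}$ is the same incidence count read from the triangle side.

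One slip in your sharpness aside: $K_4$ does not show sharpness of the lower bound, since there $\abs{\V_{\mathrm{cl}}}=4>3$. In fact $\abs{\V_{\mathrm{cl}}}=3$ forces $\tau=1$, so the single-triangle example is the one that shows it.
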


\begin{proof}
If $\tau>0$, at least one triangle exists and contributes three clustered vertices, so $\abs{\V_{\mathrm{cl}}}\ge 3$.
Conversely, every clustered vertex lies in at least one triangle.
Counting vertex--triangle incidences gives $\sum_{v\in\V_{\mathrm{cl}}}\tau(v)=3\tau$ and each $\tau(v)\ge 1$, hence $\abs{\V_{\mathrm{cl}}}\le 3\tau$.
\end{proof}

\begin{definition}[Traversing vertex types]\label{def:Ttypes}
Let $G$ be decomposed into clustered vertices $\V_{\mathrm{cl}}$ and traversing vertices $\V_{\mathrm{tr}}$.
Among the traversing vertices we distinguish two types:
\begin{itemize}
\item $T_3$: traversing vertices adjacent to exactly one clustered block but not internal to it;
\item $T_4$: traversing vertices adjacent to two or more clustered blocks.
\end{itemize}
Thus $\V_{\mathrm{tr}} = T_3 \cup T_4$.
\end{definition}

\begin{proposition}[Dominating set size and quotient size]\label{prop:quot_size}
Let $S$ be a dominating set of the clustered core $H=G[\V_{\mathrm{cl}}]$ and construct the ego--traversing partition $\Pi$ by aggregating
dominating ego blocks and retaining $T_3\cup T_4$ traversing vertices as singleton blocks.
Then the number of blocks is
\[
r = \abs{S} + \abs{T_3} + \abs{T_4}.
\]
\end{proposition}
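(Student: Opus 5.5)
The count follows once we check that the construction yields a genuine partition of $\V$ whose blocks are in explicit bijection with $S\sqcup T_3\sqcup T_4$. I would fix the construction concretely and verify disjointness and coverage, after which the formula is just additivity of cardinalities over disjoint index sets.

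First, the aggregated part. For each $s\in S$, let the ego block $E_s$ consist of $s$ together with those clustered neighbours assigned to $s$. Because $S$ dominates $H=G[\V_{\mathrm{cl}}]$, every clustered vertex lies in $S$ or has a neighbour in $S$. Ego neighbourhoods may overlap, so the construction must assign each clustered vertex to exactly one dominator; I would use a fixed tie-break, for instance the smallest-index $s\in S$ dominating it, with each $s\in S$ assigned to itself. Then $\{E_s\}_{s\in S}$ is a partition of $\V_{\mathrm{cl}}$ into exactly $\abs{S}$ blocks. Each $E_s$ is nonempty since $s\in E_s$, and distinct $s$ give distinct blocks since $s\neq s'$ are both self-assigned. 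This disjointification is the step I expect to be the only real subtlety. Without it, overlapping egos would not form a partition and the count could fail, so I would state the tie-break explicitly as part of the construction of $\Pi$.

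Second, the singleton part. By Definition~\ref{def:Ttypes}, $\V_{\mathrm{tr}}=T_3\cup T_4$, and this union is disjoint because ``exactly one'' and ``two or more'' adjacent clustered blocks are mutually exclusive. Moreover, $\V_{\mathrm{tr}}\cap\V_{\mathrm{cl}}=\emptyset$ because $\tau(v)=0$ and $\tau(v)>0$ are exclusive. Retaining each traversing vertex as a singleton therefore contributes $\abs{T_3}+\abs{T_4}$ blocks, pairwise disjoint and disjoint from every $E_s$.

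Finally, the union of all blocks is $\V_{\mathrm{cl}}\cup T_3\cup T_4=\V$, so $\Pi$ is a partition of $\V$. Its blocks are indexed bijectively by the disjoint union $S\sqcup T_3\sqcup T_4$, which gives $r=\abs{S}+\abs{T_3}+\abs{T_4}$. I would add a remark on one implicit assumption: every traversing vertex is of type $T_3$ or $T_4$, as Definition~\ref{def:Ttypes} asserts. If some traversing vertex had no clustered neighbour, for example an isolated vertex, it would have to be included as a further singleton or excluded from $G$ beforehand. I would make this hypothesis explicit by taking the stated decomposition $\V_{\mathrm{tr}}=T_3\cup T_4$ as given.
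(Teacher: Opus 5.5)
Your argument is correct and is essentially the paper's proof: one block per ego center $s\in S$, one singleton per vertex of $T_3\cup T_4$, with the blocks pairwise disjoint and covering $\V$, so $r=\abs{S}+\abs{T_3}+\abs{T_4}$. The only difference is that you make explicit what the paper leaves to ``by construction'', namely a tie-break that disjointifies overlapping ego neighbourhoods and the mutual disjointness of $T_3$, $T_4$ and $\V_{\mathrm{cl}}$.
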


\begin{proof}
By construction there is exactly one block per dominating ego center $s\in S$.
All other vertices are assigned to one of these ego blocks except those traversing vertices retained as singletons, which are precisely $T_3\cup T_4$.
The blocks are disjoint and cover $\V$, so the count follows.
\end{proof}

\begin{remark}
Proposition~\ref{prop:Vc_tau} and the identity $\omega=m_2-3\tau$ imply that $\abs{\V_{\mathrm{cl}}}$ is simultaneously controlled by closure ($\tau$)
and openness ($m_2$).
In particular, for sparse graphs with fixed $m$ and small $\lambda_1$ (Proposition~\ref{prop:tri_spectral_upper}), $\tau$ and thus $\abs{\V_{\mathrm{cl}}}$ cannot be large.
\end{remark}

\section{Dominating ego networks and a majorized contraction theorem}
We now formalize the ``dominating ego networks + traversing nodes'' construction from \parencite{boudourides_2022_note}
and prove a two--walk transfer theorem under contraction.

\subsection{Dominating sets on the clustered core}
Assume $\V_{\mathrm{cl}}\neq\emptyset$ and let $H:=G[\V_{\mathrm{cl}}]$ be the induced subgraph on clustered vertices.
A set $S\subseteq \V_{\mathrm{cl}}$ is a \emph{dominating set} of $H$ if every vertex in $\V_{\mathrm{cl}}$ is in $S$ or adjacent in $H$ to some $s\in S$ \parencite{haynes_hedetniemi_slater_1998}.
Vertices in $S$ are called \emph{dominating clustered vertices} and those in $\V_{\mathrm{cl}}\setminus S$ \emph{dominated clustered vertices}.

Following \parencite{boudourides_2022_note}, we also partition traversing vertices into four types according to their adjacency to dominating/dominated clustered vertices.
We use this only to define which traversing vertices are retained as singletons in the contraction.

\subsection{Partitions and the directed edge--sum quotient}

Let $\Pi=\{P_1,\dots,P_r\}$ be a partition of $\V$ into disjoint nonempty blocks.
Write $b(v)\in\{1,\dots,r\}$ for the block index of $v$.

\begin{definition}[Directed edge--sum quotient matrix]
Let $\Pi=\{P_1,\dots,P_r\}$ be a partition of $\V$ and let $b(v)$ denote the block index of $v$.
The \emph{directed edge--sum quotient matrix} $B\in\R^{r\times r}$ is defined by
\begin{equation}\label{eq:Bdef}
B_{ab} := \#\{ (u,v)\in \V\times\V : u\sim v,\ b(u)=a,\ b(v)=b\}.
\end{equation}
\end{definition}
Thus, for $a\neq b$, $B_{ab}$ equals the number of edges between $P_a$ and $P_b$ (counted once per direction), and $B_{aa}=2\abs{\E(P_a)}$ counts internal directed edges.
Since the underlying graph is undirected, the matrix $B$ is symmetric off--diagonal, i.e., $B_{ab}=B_{ba}$ for $a\neq b$.

\begin{definition}[Aggregated two--walk matrix]
Let $\Pi=\{P_1,\dots,P_r\}$ be a partition of $\V$ and let $\A$ be the adjacency matrix of $G$.
The \emph{aggregated two--walk matrix} $M\in\R^{r\times r}$ is defined by
\begin{equation}\label{eq:Mdef}
M_{ab} := \sum_{\substack{u\in P_a\\ w\in P_b}} (\A^2)_{uw},
\end{equation}
where $(\A^2)_{uw}$ counts the number of length--2 walks between vertices $u$ and $w$.
\end{definition}

\subsection{Wedge--equitable partitions and comparison to equitable partitions}
The equality regime in next Theorem~\ref{thm:transfer} defines a regularity notion that is strictly stronger than the classical equitability
condition for adjacency quotients \parencite{godsil_royle_2001,brouwer_haemers_2012}.

\begin{definition}[Equitable and wedge--equitable partitions]\label{def:equitable}
A partition $\Pi=\{P_1,\dots,P_r\}$ is \emph{equitable} (for $\A$) if for every pair $(a,c)$, every vertex $x\in P_c$ has the same number of neighbors in $P_a$;
i.e., $d_a(x)$ is constant over $x\in P_c$.
We call $\Pi$ \emph{wedge--equitable} if for every triple of blocks $(a,b,c)$ there exists a vertex $x^\star\in P_c$ such that
$d_a(x)d_b(y)=0$ for all $x,y\in P_c$ with $(x,y)\neq (x^\star,x^\star)$.
\end{definition}

\begin{proposition}\label{prop:wedge_strict}
Every wedge--equitable partition is equitable, but not conversely.
\end{proposition}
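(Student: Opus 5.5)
The plan is to work directly from the quantifier structure in Definition~\ref{def:equitable}: first extract a strong structural consequence of wedge--equitability (a ``single active vertex'' property), then read off equitability from it, and finally give a small explicit graph for the failure of the converse.

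\emph{Step 1 (single active vertex per block).} Fix blocks $P_a$ and $P_c$ and apply the wedge--equitable condition to the triple $(a,a,c)$. Let $x^\star\in P_c$ be the guaranteed vertex. For any $x\in P_c$ with $x\neq x^\star$, choosing $y=x$ gives $(x,x)\neq(x^\star,x^\star)$, hence $d_a(x)^2=0$ and so $d_a(x)=0$. Thus, within $P_c$, at most the single vertex $x^\star$ can have neighbors in $P_a$. Using mixed triples $(a,b,c)$ with $a\neq b$ sharpens this. If $x\neq y$ in $P_c$ had $d_a(x)>0$ and $d_b(y)>0$, then whichever vertex plays the role of $x^\star$, the pair $(x,y)$ is not $(x^\star,x^\star)$, which is a contradiction. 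Hence in each block at most one vertex has \emph{any} neighbors at all; every other vertex of a non-singleton block is isolated in $G$.

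\emph{Step 2 (equitability).} If $|P_c|=1$, the function $x\mapsto d_a(x)$ is trivially constant on $P_c$. If $|P_c|\ge 2$, Step~1 shows that $d_a(x)=0$ for every $x\neq x^\star$, so equitability reduces to showing $d_a(x^\star)=0$ as well. This is the step I expect to be the main obstacle.

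The pairs $(x,x^\star)$ and $(x^\star,y)$ only yield products with a factor $d_a(x)=0$, so they give no information about $x^\star$. I would first try a double count: the number of edges between $P_a$ and $P_c$ equals both $\sum_{x\in P_c}d_a(x)$ and $\sum_{z\in P_a}d_c(z)$, and Step~1 applied to the block $P_a$ concentrates the second sum on a single vertex as well. I would then try to combine this with the triple $(c,c,a)$ to force the common value to vanish.

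If this cannot be closed, the fallback is to invoke the standing setting of the ego--traversing contraction, where $G$ has no isolated vertices. Step~1 then shows that a non-singleton block would contain a vertex with no neighbors, which is impossible. So every block is a singleton, and the discrete partition is trivially equitable. I expect the final write-up to need exactly one of these two closures, and I flag that without such a hypothesis a graph with an isolated vertex, for instance $P_2$ together with $K_1$ under the partition $\{1\},\{2,3\}$, looks like it satisfies the wedge condition while violating equitability. This must be checked carefully before committing to the statement's full generality.

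\emph{Step 3 (strictness).} For the failure of the converse I would take $G=K_2$ with the trivial one-block partition $\Pi=\{V\}$. This partition is equitable, since each vertex has exactly one neighbor in the block. It is not wedge--equitable: for the triple (block, block, block), whichever vertex is chosen as $x^\star$, the other vertex $x$ gives $d(x)d(x)=1\neq 0$. The same reasoning shows that any regular graph with the trivial partition, or any equitable partition having a block with two vertices that each have a neighbor, is a counterexample.
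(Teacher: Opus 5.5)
Your Step 1 is correct, and your suspicion in Step 2 is justified. The example you flag is a genuine counterexample, so the forward implication is false as stated. No argument can close that step, and in particular the double count combined with the triple $(c,c,a)$ cannot.

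Concretely, take $G$ on $\{1,2,3\}$ with the single edge $\{1,2\}$ and $\Pi=\{\{1\},\{2,3\}\}$. The singleton block satisfies every triple vacuously. In $P_2=\{2,3\}$ the only nonzero value is $d_1(2)=1$. The choice $x^\star=2$ works for every $(a,b)$, because any pair $(x,y)\neq(2,2)$ involves vertex $3$, all of whose $d$-values vanish. Yet $d_1(2)=1\neq 0=d_1(3)$, so the partition is not equitable. Here your double count only gives $1=1$, and the triple $(2,2,1)$ is vacuous, so nothing forces $d_1(x^\star)=0$.

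What is true is that a wedge--equitable partition is equitable if and only if every non-singleton block consists entirely of isolated vertices. In particular, when $G$ has no isolated vertices, your Step 1 shows the discrete partition is the only wedge--equitable one, and the implication holds trivially. So drop the hedging: present the counterexample, and prove the amended statement via your fallback.

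The paper's proof makes exactly the move you declined to make. Via $b=a$ and then varying $b$, it obtains that at most one vertex of $P_c$ has $d_a>0$ and that this exceptional vertex is the same for all $a$; that is your Step 1. It then asserts that this forces constant neighbor counts on $P_c$. That is a non sequitur: the common exceptional vertex may have $d_a(x^\star)>0$ while $d_a$ vanishes on the rest of $P_c$, which is precisely your example.

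For the ``not conversely'' part, your $K_2$ example with the one-block partition is correct. It is the same idea the paper sketches: two vertices of one block both sending edges out, which Step 1 forbids. The paper defers its explicit example to an appendix that is not in the source.

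One small fix: ``any regular graph with the trivial partition'' needs degree at least $1$. An edgeless graph with a single block is wedge--equitable, so it is not a counterexample.
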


\begin{proof}
If $\Pi$ is wedge--equitable, fix blocks $(a,c)$ and take $b=a$.
Then for each $P_c$ there is a vertex $x^\star$ such that $d_a(x)d_a(y)=0$ for all distinct $x,y\in P_c$.
Thus at most one vertex in $P_c$ can have $d_a(\cdot)>0$, and for all other vertices $d_a(\cdot)=0$.
In particular, $d_a(x)$ is constant on $P_c$ if $P_c$ has size $1$, and otherwise it is constant \emph{except possibly at one vertex}.
However, wedge--equitable also requires this to hold for all $b$ simultaneously, forcing the exceptional vertex to be the same across $a$,
and hence forcing constant neighbor counts across $P_c$ for each $a$.
This is exactly equitability.

To see nonconverse, consider a regular partition in which each block has two vertices each connected in the same way to another block.
Such a partition can be equitable while having two distinct vertices in a block that both connect to the same external block, violating the ``single middle vertex'' requirement in \eqref{eq:error}.
An explicit example is given in the Appendix.
\end{proof}

\begin{remark}[Interpretation]
Equitable partitions preserve adjacency quotients exactly.
Wedge--equitable partitions preserve \emph{two--walk} structure (off diagonal) exactly.
Theorem~\ref{thm:transfer} therefore identifies the correct regularity notion for contractions meant to preserve wedge/triadic information.
\end{remark}

\begin{theorem}[Two--walk transfer under contraction]\label{thm:transfer}
For any partition $\Pi$ of $\V$, the matrices $B$ and $M$ defined by \eqref{eq:Bdef}--\eqref{eq:Mdef} satisfy the entrywise inequality
\begin{equation}\label{eq:transferineq}
M \le B^2.
\end{equation}
Moreover, for $a\neq b$ the overcount has the explicit form
\begin{equation}\label{eq:error}
(B^2-M)_{ab} = \sum_{c=1}^r\ \sum_{\substack{x,y\in P_c\\ x\neq y}} d_a(x)\, d_b(y)\ \ge 0,
\end{equation}
where $d_a(x)$ denotes the number of neighbors of $x$ that lie in block $P_a$.
Equality $M=B^2$ holds off-diagonal if and only if for every block $P_c$ and every pair of blocks $(a,b)$,
the sequences $\{d_a(x)\}_{x\in P_c}$ and $\{d_b(x)\}_{x\in P_c}$ are supported on a common single vertex (equivalently, each block is \emph{wedge--equitable}).
\end{theorem}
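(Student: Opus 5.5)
The whole argument rests on one bookkeeping identity: both $B$ and $M$ can be written as sums, over middle vertices, of the block-degree quantities $d_a(x)$. After that, the inequality and the error term come from expanding a product of two sums.

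\textbf{Step 1: rewrite the entries of $B$.} For a block $P_c$ and any block index $a$, I claim
\[
B_{ac}=\sum_{x\in P_c} d_a(x).
\]
By definition \eqref{eq:Bdef}, $B_{ac}$ counts ordered adjacent pairs $(u,x)$ with $u\in P_a$ and $x\in P_c$. Grouping these pairs by $x$ gives the formula. Since adjacency is symmetric, the same count gives $B_{ca}=B_{ac}$ as well, and this holds on the diagonal too (there it is $2|\E(P_a)|$). Hence
\[
(B^2)_{ab}=\sum_{c}B_{ac}B_{cb}=\sum_c\Bigl(\sum_{x\in P_c}d_a(x)\Bigr)\Bigl(\sum_{y\in P_c}d_b(y)\Bigr).
\]

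\textbf{Step 2: rewrite the entries of $M$.} Expand $(\A^2)_{uw}=\sum_k \A_{uk}\A_{kw}$ in \eqref{eq:Mdef} and exchange the order of summation so that the middle vertex $k$ is outermost. For fixed $k$, the sum over $u\in P_a$ of $\A_{uk}$ is $d_a(k)$, and the sum over $w\in P_b$ of $\A_{kw}$ is $d_b(k)$. So
\[
M_{ab}=\sum_k d_a(k)d_b(k)=\sum_c\sum_{x\in P_c} d_a(x)d_b(x).
\]

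\textbf{Step 3: subtract.} Expanding the product in Step 1 gives a sum over pairs $(x,y)\in P_c\times P_c$. The diagonal terms $x=y$ are exactly $M_{ab}$ from Step 2. The off-diagonal terms $x\neq y$ are exactly the right side of \eqref{eq:error}. Every $d_a(x)$ is a nonnegative integer, so this error is nonnegative, which proves \eqref{eq:transferineq}. The computation does not use $a\neq b$, so the inequality holds for all entries, including the diagonal.

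\textbf{Step 4: equality characterization.} This is the only step needing care. $M_{ab}=(B^2)_{ab}$ off-diagonal holds if and only if every cross product vanishes, that is, $d_a(x)d_b(y)=0$ for all $c$ and all distinct $x,y\in P_c$. I will show that this condition, for a fixed triple $(a,b,c)$, is equivalent to the existence of $x^\star\in P_c$ as in Definition~\ref{def:equitable}.

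Sufficiency is immediate: if such an $x^\star$ exists, then every pair with $x\neq y$ is different from $(x^\star,x^\star)$, so its product vanishes.

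For necessity, assume all cross products vanish and split into two cases.
\begin{itemize}
\item If one of the sequences $d_a|_{P_c}$ or $d_b|_{P_c}$ is identically zero, then every product vanishes and any $x^\star\in P_c$ works.
\item Otherwise, pick $x_1$ with $d_a(x_1)>0$ and $x_2$ with $d_b(x_2)>0$. The vanishing of cross products forces $x_1=x_2$. The same argument shows each sequence has support exactly the single vertex $x^\star:=x_1$.
\end{itemize}
This "common single vertex, or one sequence vanishes" reading is the precise meaning of the phrase in the theorem. It coincides exactly with the wedge-equitable definition.

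The one subtlety is the quantifier over $(a,b)$. The off-diagonal equality concerns only pairs with $a\neq b$, while Definition~\ref{def:equitable} quantifies over all triples, including $a=b$. I will state the characterization for the triples actually tested by the off-diagonal entries. If the diagonal is also included, the same argument applies verbatim to $(B^2-M)_{aa}$.
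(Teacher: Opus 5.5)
Your proof is correct and follows the paper's argument: both expand $M_{ab}=\sum_c\sum_{x\in P_c}d_a(x)d_b(x)$ by middle vertex, write $B_{ac}=\sum_{x\in P_c}d_a(x)$, expand $(B^2)_{ab}$ as a double sum over $P_c\times P_c$, and identify the off-diagonal terms as the error. Your Step~4 is more careful than the paper's, and both of your caveats are needed for the stated equivalence to hold: the case where $d_a|_{P_c}$ or $d_b|_{P_c}$ vanishes identically, and the restriction to $a\neq b$ (a single edge as one block plus an isolated singleton block gives off-diagonal equality while violating the $a=b$ triples of Definition~\ref{def:equitable}).
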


\begin{proof}
Fix $a,b$.
By definition of $\A^2$,
\[
M_{ab}=\sum_{u\in P_a}\sum_{w\in P_b}\sum_{v\in\V} \A_{uv}\A_{vw}
=\sum_{c=1}^r\ \sum_{v\in P_c}\Big(\sum_{u\in P_a}\A_{uv}\Big)\Big(\sum_{w\in P_b}\A_{vw}\Big).
\]
For $v\in P_c$, define $d_a(v):=\sum_{u\in P_a}\A_{uv}$ and $d_b(v):=\sum_{w\in P_b}\A_{vw}$.
Then
\[
M_{ab}=\sum_{c=1}^r \sum_{v\in P_c} d_a(v)d_b(v).
\]
By definition of the directed edge--sum quotient matrix,
\[
B_{ac}=\sum_{v\in P_c} d_a(v),
\]
since $d_a(v)$ counts the number of neighbors of $v$ in block $P_a$ and the quotient entry aggregates these contributions over all vertices of $P_c$.

On the other hand,
\[
(B^2)_{ab}=\sum_{c=1}^r B_{ac}B_{cb}
=\sum_{c=1}^r\Big(\sum_{v\in P_c} d_a(v)\Big)\Big(\sum_{v\in P_c} d_b(v)\Big)
=\sum_{c=1}^r\sum_{x,y\in P_c} d_a(x)d_b(y).
\]
Subtracting yields \eqref{eq:error}, hence $M\le B^2$ entrywise.
The equality characterization follows by examining the condition
\[
\sum_{x,y\in P_c} d_a(x)d_b(y) = \sum_{v\in P_c} d_a(v)d_b(v).
\]

($\Rightarrow$) If $M=B^2$ off--diagonal, then for every block $P_c$ and every pair $(a,b)$ all cross terms with $x\neq y$ must vanish. Hence whenever $d_a(x)>0$ and $d_b(y)>0$ with $x,y\in P_c$, we must have $x=y$. Thus the sequences $\{d_a(x)\}_{x\in P_c}$ and $\{d_b(x)\}_{x\in P_c}$ are supported on a common single vertex.

($\Leftarrow$) Conversely, if for every block $P_c$ and every pair $(a,b)$ the sequences $\{d_a(x)\}_{x\in P_c}$ and $\{d_b(x)\}_{x\in P_c}$ are supported on a common single vertex, then all cross terms $d_a(x)d_b(y)$ with $x\neq y$ vanish, and therefore
\[
\sum_{x,y\in P_c} d_a(x)d_b(y) = \sum_{v\in P_c} d_a(v)d_b(v).
\]
Summing over $c$ gives $M_{ab}=(B^2)_{ab}$ for $a\neq b$, as claimed.
\end{proof}

\begin{remark}[Why this avoids quotient traps]
Theorem~\ref{thm:transfer} is deliberately an \emph{inequality} with an explicit error and a sharp equality condition.
This is the inequality form of a ``two--walk transfer'' statement: without wedge--equitable hypotheses, an equality claim is generally false.
\end{remark}

\section{Computational illustration on ten graphs and their contractions}
This section illustrates the theory rather than presenting a dataset study.
We report only invariants tied directly to the preceding theorems and one contraction diagnostic tied to Theorem~\ref{thm:transfer}.

\subsection{Graph set and contraction rule}
We consider ten benchmark graphs (Florentine families, Karate club, Erd\H{o}s--R\'enyi, Dolphins, Les Mis\'erables, Football, Jazz, \emph{C. elegans}, USAir, and Network Science coauthorship).
For each graph we compute the clustered set $\V_{\mathrm{cl}}$, choose a dominating set $S$ of $G[\V_{\mathrm{cl}}]$ (as in \parencite{boudourides_2022_note}),
and build a partition $\Pi$ whose blocks aggregate each dominating ego neighborhood together with assigned dominated and adjacent traversing vertices,
while retaining traversing vertices of types $T_3$ and $T_4$ as singleton blocks.
The directed edge--sum quotient matrix $B$ is then computed from $\Pi$.

\subsection{Invariants and openness}
Table~\ref{tab:base} reports $(n,m,\tau,m_2,\omega,\abs{\V_{\mathrm{cl}}},\abs{\V_{\mathrm{tr}}},\abs{S})$.
The openness invariant $\omega=m_2-3\tau$ is precisely the number of open wedges (Theorem~\ref{thm:clustergraph}) and is therefore a direct measure of ``structural openness''
in the sense of wedge structure.

\begin{table}
\caption{Summary invariants for the ten base graphs.}
\label{tab:base}
\begin{tabular}{lrrrrrrrr}
\toprule
Graph & n & m & triangles & m2 & omega & Vc & Vt & dom \\
\midrule
florentine & 15 & 20 & 3 & 47 & 38 & 7 & 8 & 5 \\
karate & 34 & 78 & 45 & 528 & 393 & 32 & 2 & 4 \\
ErdosRenyi & 50 & 129 & 27 & 685 & 604 & 35 & 15 & 11 \\
dolphins & 62 & 159 & 95 & 923 & 638 & 46 & 16 & 14 \\
LesMiserables & 77 & 254 & 467 & 2808 & 1407 & 57 & 20 & 10 \\
football & 115 & 613 & 810 & 5967 & 3537 & 115 & 0 & 12 \\
jazz & 198 & 2742 & 17899 & 103212 & 49515 & 192 & 6 & 13 \\
celegans & 297 & 2148 & 3241 & 53804 & 44081 & 278 & 19 & 16 \\
usair & 332 & 2126 & 12181 & 92189 & 55646 & 272 & 60 & 36 \\
netscience & 379 & 914 & 921 & 6417 & 3654 & 351 & 28 & 55 \\
\bottomrule
\end{tabular}
\end{table}

Figure~\ref{fig:ecdf} shows the empirical cumulative distribution function (ECDF) of local clustering coefficients $C(v)$.
We emphasize that the wedge-operator framework retains more than this scalar summary:
for example, $\Tt$ records triangle multiplicities per edge, while $\Oo$ records open wedge intensity per nonedge.

\begin{figure}[t]
\centering
\includegraphics[width=\textwidth]{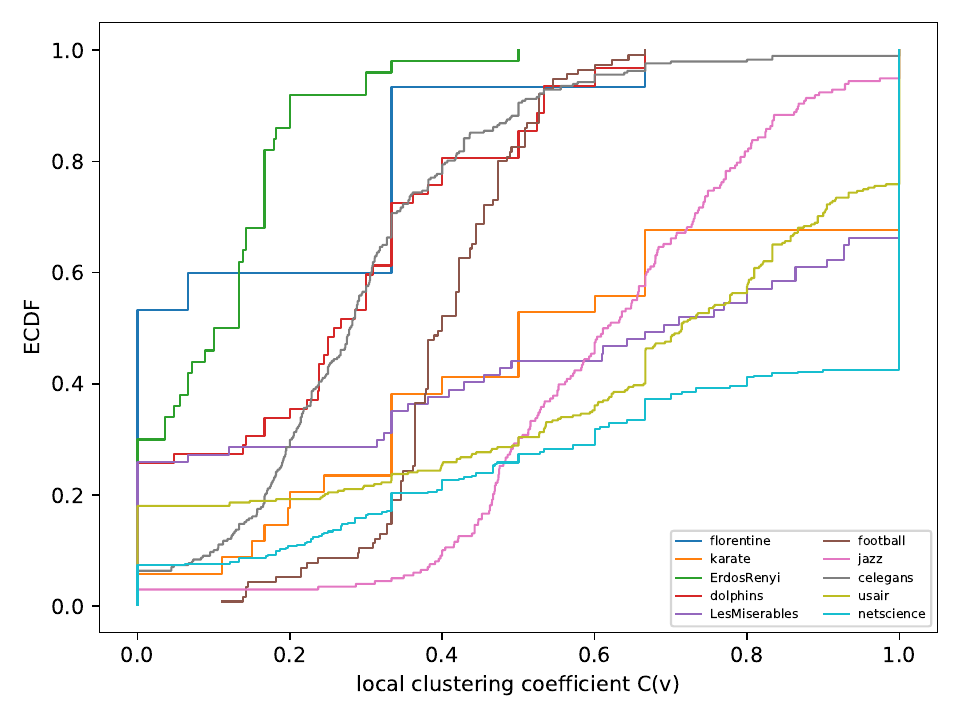}
\caption{ECDF of local clustering coefficients $C(v)$ across vertices, one curve per graph.}
\label{fig:ecdf}

\small\textit{Alt text: Empirical cumulative distribution functions of vertex clustering coefficients for multiple graphs, shown as separate curves, illustrating differences in clustering structure across networks.}

\end{figure}

\subsection{Contraction distortion diagnostic}
Theorem~\ref{thm:transfer} implies that the quotient walk count $B^2$ overcounts aggregated two--walk mass $M$, with a computable nonnegative error.
We report the ratio
\[
\rho := \frac{\sum_{a\neq b} M_{ab}}{\sum_{a\neq b} (B^2)_{ab}} \in (0,1],
\]
where values close to $1$ indicate near wedge--equitable behavior and small distortion.

Table~\ref{tab:quot} reports the two--walk transfer ratio $\rho$ for the ego--traversing partitions of the empirical networks considered here.
The quantity $\rho$ summarizes how closely the contracted quotient preserves two--walk structure: values closer to one indicate little distortion, while smaller values reflect larger overcount due to block aggregation.

\begin{table}
\caption{Quotient/contraction diagnostics for the ego--traversing partition (directed edge-sum quotient).}
\label{tab:quot}
\begin{tabular}{lrrrrrrl}
\toprule
Graph & blocks & egoblocks & TR\_singletons & B\_edges & B\_internal & ratio \\
\midrule
florentine & 8 & 5 & 3 & 9 & 11 & 0.327 \\
karate & 4 & 4 & 0 & 26 & 52 & 0.121 \\
ErdosRenyi & 14 & 11 & 3 & 83 & 46 & 0.174 \\
dolphins & 19 & 14 & 5 & 97 & 62 & 0.188 \\
LesMiserables & 13 & 10 & 3 & 85 & 169 & 0.105 \\
football & 12 & 12 & 0 & 313 & 300 & 0.085 \\
jazz & 18 & 13 & 5 & 1211 & 1531 & 0.021 \\
celegans & 16 & 16 & 0 & 1485 & 663 & 0.040 \\
usair & 47 & 36 & 11 & 976 & 1150 & 0.028 \\
netscience & 60 & 55 & 5 & 242 & 672 & 0.119 \\
\bottomrule
\end{tabular}
\end{table}

\section{Directed and weighted extensions (brief)}
Although we work with simple undirected graphs, the wedge--operator approach extends cleanly.

\subsection{Directed graphs}
For a directed graph with adjacency matrix $\A$, a natural two--step operator from sources to targets is $\A^2$ as before.
However, wedge types can be distinguished: ``out--out'' wedges correspond to $\A^2$, while ``in--in'' wedges correspond to $(\A^\top)^2$,
and mixed wedges correspond to $\A\A^\top$ and $\A^\top\A$.
Each admits a canonical support decomposition analogous to Theorem~\ref{thm:decomp} by Hadamard masking with the directed edge set and its complement.
This yields a family of triadic/open operators tailored to directed transitivity notions in social networks.

\subsection{Weighted graphs}
For a weighted undirected graph with symmetric weight matrix $W=(w_{ij})$, define $\A_W$ with $(\A_W)_{ij}=w_{ij}$ for $i\neq j$ and $0$ on the diagonal.
Then $(\A_W^2)_{ij}=\sum_k w_{ik}w_{kj}$ is a weighted common--neighbor intensity, and one can define
$\W_W=\A_W^2-\diag(\A_W\onevec)$ and its support decomposition exactly as in \eqref{eq:TOdef}.
The contraction theorem (Theorem~\ref{thm:transfer}) also holds verbatim for nonnegative weights, replacing edge counts by weight sums.

\begin{proposition}[Weighted transfer theorem]\label{prop:weighted_transfer}
Let $G$ be a weighted undirected graph with symmetric weights $w_{uv}\ge 0$ and weighted adjacency $\A_W$.
For any partition $\Pi$ define the weighted directed edge--sum quotient $B^{(w)}$ by summing weights in \eqref{eq:Bdef},
and define $M^{(w)}$ by aggregating $(\A_W^2)_{uw}$ as in \eqref{eq:Mdef}.
Then $M^{(w)}\le (B^{(w)})^2$ entrywise, with an error formula identical to \eqref{eq:error} where $d_a(x)$ becomes the total weight from $x$ into block $P_a$.
\end{proposition}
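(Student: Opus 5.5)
The plan is to repeat the proof of Theorem~\ref{thm:transfer} with weights, since that argument used only bilinearity and nonnegativity, never that entries lie in $\{0,1\}$. For $x\in\V$ and a block index $a$, set $d^{(w)}_a(x):=\sum_{u\in P_a} w_{ux}$, the total weight from $x$ into $P_a$; this is the weighted analogue of $d_a(x)$ and is nonnegative because $w_{uv}\ge 0$. The weighted quotient defined by summing weights in \eqref{eq:Bdef} is
\[
B^{(w)}_{ab}=\sum_{u\in P_a,\ v\in P_b} w_{uv}.
\]
Grouping by the vertex $v\in P_b$ gives $B^{(w)}_{ab}=\sum_{v\in P_b} d^{(w)}_a(v)$. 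By symmetry of the weights, also $B^{(w)}_{cb}=\sum_{v\in P_c} d^{(w)}_b(v)$.

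\textbf{Step 1.} Expand the aggregated weighted two--walk mass:
\[
M^{(w)}_{ab}=\sum_{u\in P_a}\sum_{w'\in P_b}\sum_{v\in\V}(\A_W)_{uv}(\A_W)_{vw'} .
\]
Interchange the finite sums and group the middle vertex $v$ by its block $P_c$. Because $\A_W$ has zero diagonal and is symmetric, this gives
\[
M^{(w)}_{ab}=\sum_{c=1}^r\sum_{v\in P_c} d^{(w)}_a(v)\,d^{(w)}_b(v).
\]

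\textbf{Step 2.} Compute the square of the quotient:
\[
\bigl((B^{(w)})^2\bigr)_{ab}=\sum_c B^{(w)}_{ac}B^{(w)}_{cb}=\sum_c\Bigl(\sum_{x\in P_c}d^{(w)}_a(x)\Bigr)\Bigl(\sum_{y\in P_c}d^{(w)}_b(y)\Bigr).
\]
Expand this as $\sum_c\sum_{x,y\in P_c} d^{(w)}_a(x)\,d^{(w)}_b(y)$.

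\textbf{Step 3.} Subtract Step~1 from Step~2. The diagonal terms $x=y$ cancel, leaving
\[
\bigl((B^{(w)})^2-M^{(w)}\bigr)_{ab}=\sum_c\sum_{x\neq y\in P_c} d^{(w)}_a(x)\,d^{(w)}_b(y).
\]
This is exactly the error formula \eqref{eq:error} with $d_a$ replaced by $d^{(w)}_a$. Every summand is a product of nonnegative numbers, so the difference is nonnegative, and $M^{(w)}\le (B^{(w)})^2$ holds entrywise.

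The one point needing care is how the diagonal and the symmetry interact. We need $(\A_W)_{vv}=0$, so the middle vertex contributes no self-loop terms. We need $w_{uv}=w_{vu}$, so that $B^{(w)}_{ac}$, computed from $P_a$ into $P_c$, equals $\sum_{v\in P_c}d^{(w)}_a(v)$ rather than an out-weight version. The weighted adjacency is defined with zero diagonal and symmetric weights, so both hold.

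The derivation never requires $a\neq b$, so the inequality holds on the diagonal as well. Nonnegativity of the weights is the sole hypothesis on which the sign of the error depends: with signed weights the identity in Step~3 would still hold, but the inequality could fail.
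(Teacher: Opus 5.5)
Your proposal is correct and follows the paper's own argument. The paper's proof sketch repeats the proof of Theorem~\ref{thm:transfer} with $\A$ replaced by $\A_W$, which is your Steps 1--3: group middle vertices by block, expand $(B^{(w)})^2$, and identify the $x\neq y$ cross terms as the nonnegative error. Your extra remarks are accurate and make the sketch slightly more complete: the identity also holds for $a=b$, and nonnegativity of the weights affects only the sign of the error, not the identity itself.
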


\begin{proof}[Proof sketch]
Repeat the proof of Theorem~\ref{thm:transfer} with $\A$ replaced by $\A_W$.
Nonnegativity of weights ensures that the cross-term expansion remains $\ge 0$.
\end{proof}

\begin{remark}[Hypergraph direction]
Replacing two--paths by length--2 walks in bipartite or hypergraph incidence graphs suggests higher--order analogues of $\W$ and its support decomposition,
bringing the present operator viewpoint closer to motif/higher--order network models \parencite{benson_2016_higherorder}.
\end{remark}

\section{Discussion and directions}
The operator decomposition $\W=\Tt+\Oo$ (Theorem~\ref{thm:decomp}) provides a matrix--valued refinement of clustering that separates triadic closure from openness in a way that is both canonical and algebraically tractable.
The sharp characterization $\omega=0$ (Theorem~\ref{thm:clustergraph}) identifies exactly when wedge openness vanishes: cluster graphs (disjoint unions of cliques).
Finally, the majorized transfer theorem (Theorem~\ref{thm:transfer}) explains how two--walk structure behaves under contraction and makes explicit the wedge--equitable regime in which equality can be expected.

Several mathematically natural extensions are immediate.
First, one may replace $\A^2$ by motif adjacency operators of higher order \parencite{milo_2002_motifs,benson_2016_higherorder} and seek analogous canonical support decompositions.
Second, one can develop spectral bounds for $\tau$ and $\omega$ in terms of $\lambda_{\max}(\A)$ and degree moments, sharpening classical inequalities \parencite{cao_1998_bounds,brouwer_haemers_2012}.
Third, the wedge--equitable condition suggests a new hierarchy of partitions between equitable partitions (for $\A$) and regular partitions for higher--order operators \parencite{godsil_royle_2001}.
We view these as promising directions for a theorem--driven network science agenda.

\section*{Acknowledgements}
This paper develops and extends my 2022 note \parencite{boudourides_2022_note}. I am grateful to E. and G. for their care and support, which made this work possible.

\printbibliography

@misc{boudourides_2022_note,
  author       = {Boudourides, Moses},
  title        = {A Note on 2--Degree Clustering and Dominating Ego Networks in an Undirected Graph},
  howpublished = {manuscript},
  year         = {2022},
  note         = {Draft, January 2022},
  url          = {https://github.com/mboudour/var/blob/master/triadicity.pdf},
}

@article{watts_strogatz_1998,
  title        = {Collective dynamics of `small-world' networks},
  author       = {Watts, Duncan J. and Strogatz, Steven H.},
  journal      = {Nature},
  year         = {1998},
  volume       = {393},
  number       = {6684},
  pages        = {440--442},
  doi          = {10.1038/30918}
}

@article{newman_2003_structure,
  title        = {The Structure and Function of Complex Networks},
  author       = {Newman, Mark E. J.},
  journal      = {SIAM Review},
  year         = {2003},
  volume       = {45},
  number       = {2},
  pages        = {167--256},
  doi          = {10.1137/S003614450342480}
}

@book{newman_2010_networks,
  title        = {Networks: An Introduction},
  author       = {Newman, Mark E. J.},
  publisher    = {Oxford University Press},
  year         = {2010}
}

@article{serrano_boguna_2006_clustering,
  title        = {Clustering in complex networks. I. General formalism},
  author       = {Serrano, M. {\'A}. and Bogu{\~n}{\'a}, M.},
  journal      = {Physical Review E},
  year         = {2006},
  volume       = {74},
  number       = {5},
  pages        = {056114},
  doi          = {10.1103/PhysRevE.74.056114}
}

@book{burt_1995_structuralholes,
  title        = {Structural Holes: The Social Structure of Competition},
  author       = {Burt, Ronald S.},
  publisher    = {Harvard University Press},
  year         = {1995}
}

@article{borgatti_1997_structuralholes,
  title        = {Structural Holes: Unpacking Burt's Redundancy Measures},
  author       = {Borgatti, Stephen P.},
  journal      = {Connections},
  year         = {1997},
  volume       = {20},
  number       = {1},
  pages        = {35--38}
}

@book{perry_pescosolido_borgatti_2018,
  title        = {Egocentric Network Analysis: Foundations, Methods, and Models},
  author       = {Perry, Brea L. and Pescosolido, Bernice A. and Borgatti, Stephen P.},
  publisher    = {Cambridge University Press},
  year         = {2018}
}

@article{cao_1998_bounds,
  title        = {Bounds on eigenvalues and chromatic numbers},
  author       = {Cao, Dasong},
  journal      = {Linear Algebra and its Applications},
  year         = {1998},
  volume       = {270},
  pages        = {1--13},
  doi          = {10.1016/S0024-3795(96)00199-1}
}

@book{diestel_2025_graphtheory,
  title        = {Graph Theory},
  author       = {Diestel, Reinhard},
  edition      = {6},
  publisher    = {Springer},
  year         = {2025},
  doi          = {10.1007/978-3-662-70107-2}
}

@book{haynes_hedetniemi_slater_1998,
  title        = {Fundamentals of Domination in Graphs},
  author       = {Haynes, Teresa W. and Hedetniemi, Stephen T. and Slater, Peter J.},
  publisher    = {Marcel Dekker},
  year         = {1998}
}

@book{godsil_royle_2001,
  title        = {Algebraic Graph Theory},
  author       = {Godsil, Chris and Royle, Gordon},
  publisher    = {Springer},
  year         = {2001}
}

@book{brouwer_haemers_2012,
  title        = {Spectra of Graphs},
  author       = {Brouwer, Andries E. and Haemers, Willem H.},
  publisher    = {Springer},
  year         = {2012}
}

@article{milo_2002_motifs,
  title        = {Network Motifs: Simple Building Blocks of Complex Networks},
  author       = {Milo, Ron and Shen-Orr, Shai and Itzkovitz, Shalev and Kashtan, Nadav and Chklovskii, Dmitri and Alon, Uri},
  journal      = {Science},
  year         = {2002},
  volume       = {298},
  number       = {5594},
  pages        = {824--827},
  doi          = {10.1126/science.298.5594.824}
}

@inproceedings{benson_2016_higherorder,
  title        = {Higher-order organization of complex networks},
  author       = {Benson, Austin R. and Gleich, David F. and Leskovec, Jure},
  booktitle    = {Proceedings of the 22nd ACM SIGKDD International Conference on Knowledge Discovery and Data Mining (KDD)},
  year         = {2016},
  pages        = {115--124},
  url          = {https://arxiv.org/abs/1602.08438}
}

\end{document}